\newtheorem{lemma}{Lemma}
\newtheorem{theorem}{Theorem}
\newenvironment{proof}{\textbf{Proof:}}{\hfill$\square$}
\begin{document}

\title{A Leader-Follower Approach for The Attitude Synchronization of Multiple Rigid Body Systems on $SO(3)$}

\author{Yiliang Li, Jun-e Feng, and Abdelhamid Tayebi, \IEEEmembership{Fellow, IEEE}

\thanks{Yiliang Li and Jun-e Feng are with School of Mathematics, Shandong University, Jinan, Shandong 250100, P.R. China (e-mail: liyiliang1994@126.com,fengjune@sdu.edu.cn).}
\thanks{Abdelhamid Tayebi is with School of Mathematics, Shandong University, Jinan, Shandong 250100, P.R. China, Department of Electrical Engineering, Lakehead University, Thunder Bay, Ontario, Canada (e-mail: \{hlyu1,atayebi\}@lakeheadu.ca).}
\thanks{This work was supported by the National Sciences and Engineering Research Council of Canada (NSERC), under the grant NSERC-DG RGPIN 2020-06270, the National Natural Science Foundation of China, under the grant 62273201; 62350037, and the Research Fund for the Taishan Scholar Project of Shandong Province of China under the grant tstp20221103.}}



\maketitle

\begin{abstract}
This paper deals with the leader-follower attitude synchronization problem for a group of heterogeneous rigid body systems on $SO(3)$ under an undirected, connected, and acyclic graph communication topology. The proposed distributed control strategy, endowed with almost global asymptotic stability guarantees, allows the synchronization of the rigid body systems to a constant desired orientation known only to a single rigid body.
Some simulation results are also provided to validate the theoretical developments and illustrate the performance of the proposed control strategy.
\end{abstract}

\begin{IEEEkeywords}
Distributed control, Leader-follower attitude synchronization, Multiple rigid body systems.
\end{IEEEkeywords}

\section{Introduction}

Attitude synchronization of a group of rigid body systems refers to the process of coordinating the orientation (\textit{i.e.,} rotational state) of multiple agents, such as drones, satellites, underwater vehicles, or robotic manipulators, so that they align with one another or track a common attitude reference. This problem is fundamental in multi-agent systems, and it has become increasingly important as modern applications rely on coordinated behavior among autonomous platforms \cite{abdessameud2013motion}.
The attitude synchronization problem for multiple rigid-body systems focuses on designing \emph{distributed control laws} that rely solely on local information,
typically obtained from neighboring agents in a communication graph. The objective is to ensure that the orientations of all rigid body systems evolve cooperatively and ultimately align with a common attitude. A major challenge in addressing this problem arises from the mathematical nature of the attitude representation. The only representation that describes rigid-body orientation \emph{globally}, \emph{uniquely}, and \emph{without singularities} is
the rotation matrix, which belongs to the special orthogonal group $SO(3)$. Due to the fact that $SO(3)$ is a compact matrix Lie group which is not homeomorphic to any Euclidean space, classical synchronization or consensus approaches developed for multi-agent systems evolving in $\mathbb{R}^n$ cannot be directly extended to this setting. In Euclidean spaces, consensus algorithms typically rely on linear operations that are not well defined on curved manifolds like $SO(3)$. Furthermore, the topology of $SO(3)$ prohibits the existence of any smooth feedback law that achieves global asymptotic stabilization to a single attitude \cite{Koditschek,Bhat_SCL2000}, rendering the synchronization problem intrinsically more complex. These geometric and topological restrictions motivate the development of specialized control strategies that explicitly account for the structure of $SO(3)$. As a result, attitude synchronization on $SO(3)$ requires fundamentally different techniques from classical consensus problems, often leveraging tools from differential geometry to achieve almost-global (or global) synchronization while respecting the manifold's intrinsic properties.

Leaderless synchronization strategies (leaderless consensus) for a group of rigid body systems on $SO(3)$ with local convergence properties have been proposed, for instance, in \cite{thunberg2016consensus,deng2021attitude}, and almost global consensus strategies have been developed, for instance, in \cite{Tron_TAC2012,Markdahl_TAC2020,bough_Berk_Tay_Arxiv2025}. The work in \cite{bough_Berk_Tay_Arxiv2025} has also been extended via hybrid feedback techniques to achieve global asymptotic synchronization on $SO(3)$.
The leader follower synchronization problem on $SO(3)$, where the rigid body systems are required to synchronize to a desired orientation known only to a few agents, has been addressed in a few papers in the literature with limited stability results. For instance, an observer-based leader-follower synchronization schemes on $SO(3)$ have been proposed in \cite{peng2018specified,zou2018velocity} with local convergence results.
The leader-follower attitude synchronization problem for rigid body systems has been mainly dealt with in the literature using the modified Rodrigues parameters and the unit-quaternion representation of the attitude to leverage the classical multi-agent techniques developed on Euclidian spaces.
Some techniques using the modified Rodrigues parameters have been proposed in \cite{meng2010distributed,ren2009distributed,zou2016distributed} and other techniques using the unit quaternion representation have been proposed in \cite{liang2024edge,he2022leader,lu2020leader,abdessameud2012attitude}, for the leader-follower synchronization. Most of the existing techniques in this field claim convergence results and not asymptotic stability.
Note that the modified Rodrigues parameters provide a local and singular representation of the attitude while the unit quaternion representation is global but non-unique. The only global, non-singular and unique representation of the attitude is the rotation matrix on $SO(3)$.

In this paper, we propose a leader-follower distributed control law for the attitude synchronization of a group of heterogeneous rigid body systems on $SO(3)$ to a constant desired orientation known only to a single rigid body. 
The proposed distributed control law, relying on local information exchange described by an undirected, connected, and acyclic graph topology, is endowed with almost global asymptotic stability guarantees\footnote{An equilibrium point is almost globally asymptotically stable if it is Lyapunov stable and attractive from all initial conditions except from a set of zero Lebesgue measure.}.

The remainder of this paper is organized as follows.
Section \ref{sec2} provides the notations, and some graph theory concepts used in this paper. The problem under consideration is formulated in Section \ref{sec3}. The distributed control law is designed in Section \ref{sec4}, along with the stability proof. Section \ref{sec5} provides some stimulation results that illustrate the performance of the proposed distributed control law. Some concluding remarks are provided in Section \ref{sec6}.

\section{Preliminaries}\label{sec2}

\subsection{Notation}
The sets of real numbers, $n$-dimensional vectors and $n$-by-$m$ matrices are denoted by $\mathbb{R},\mathbb{R}^{n}$ and $\mathbb{R}^{n\times m}$, respectively.
The Euclidean norm of $x\in\mathbb{R}^n$ is defined as $\|x\|=\sqrt{x^{\top}x}$. We define the set of unit vectors in $\mathbb{R}^n$ as $\mathbb{S}^{n-1}=\{x\in\mathbb{R}^{n}|\|x\|=1\}$.
For two matrices $A,B\in\mathbb{R}^{n\times m}$, the Euclidean inner product of $A$ and $B$ is defined as $\langle\langle A, B\rangle\rangle=\mathbf{tr}(A^{\top}B)$, and the Frobenius norm of $A$ is defined as $\|A\|_F=\sqrt{\mathbf{tr}(A^{\top}A)}$.
In this paper, we use $\mathbf{0}_n$ and $\mathbf{0}_{n\times m}$ to represent the $n$-dimensional null vector and the $n$-by-$m$ null matrix, respectively.
A null matrix with an appropriate dimensionality is denoted as $\mathbf{0}$.
An $n$-dimensional identity matrix is denoted as $I_n$.
The kronecker product is represented by $\otimes$.

The special orthogonal group is defined as $SO(3)=\{R\in\mathbb{R}^{3\times 3}|R^{\top}R=RR^{\top}=I_3,\mathrm{det}(R)=1\}$, where $\mathrm{det}(R)$ stands for the determinant of $R$.
The angle-axis representation of a rotation matrix $R$ is as
\begin{equation}\label{angleaixs}
\mathcal{R}(\theta,u)=I_3+\sin(\theta)u^{\times}+(1-\cos(\theta))(u^{\times})^2,
\end{equation}
where $\theta\in\mathbb{R}$ and $u\in\mathbb{S}^{2}$ are the rotation angle and the rotation axis, respectively.
The \textit{Lie\ algebra} of $SO(3)$ is denoted by $\mathfrak{so}(3)=\{\Omega\in\mathbb{R}^{3\times 3}|\Omega=-\Omega^{\top}\}$.
The symbol $\times$ is the vector cross-product on $\mathbb{R}^3$, leading to a map $(\cdot)^{\times}:\mathbb{R}^3\rightarrow\mathfrak{so}(3)$ satisfying $x\times y=x^{\times}y$ for $x,y\in\mathbb{R}^3$, where $x^{\times}$ is a $3$-by-$3$ skew-symmetric matrix, generated by $x=[x_1\ x_2\ x_3]^{\top}$, as follows:
\[
x^{\times}=\left[
\begin{array}{ccc}
0&-x_3&x_2\\
x_3&0&-x_1\\
-x_2&x_1&0\\
\end{array}
\right]_.
\]
We define the map $\mathbf{vex}:\mathfrak{so}(3)\rightarrow\mathbb{R}^3$ such that $(\mathbf{vex}(\Omega))^{\times}=\Omega$, $\forall \Omega\in\mathfrak{so}(3)$ and $\mathbf{vex}(x^{\times})=x$, $\forall x\in\mathbb{R}^3$. Let $\mathbb{P}_a : \mathbb{R}^{3\times 3} \rightarrow \mathfrak{so}(3)$ be the projection map on the Lie algebra $\mathfrak{so}(3)$ such that $\mathbb{P}_a(A):=(A-A^\top )/2$. We define the composition map $\psi:=\mathbf{vex}\circ\mathbb{P}_a$ such that $\psi(A)=\mathbf{vex}(\mathbb{P}_a(A))=\frac{1}{2}\mathbf{vex}(A-A^{\top})=\frac{1}{2}[a_{32}-a_{23}\ a_{13}-a_{31}\ a_{21}-a_{12}]^{\top}$, for $A=[a_{ij}]\in\mathbb{R}^{3\times 3}$.
One can also verify that
\begin{equation}\label{cross}
x^{\times}y^{
\times}=yx^{\top}-(x^{\top}y)I_3,
\end{equation}
where $x,y\in\mathbb{R}^3$.
The normalized attitude norm on $SO(3)$ is defined as $|R|_I=\frac{1}{2}\sqrt{\mathbf{tr}(I-R)}$.
We define the matrix $\mathbf{E}(A):=\frac{1}{2}(\mathbf{tr}(A)I_3-A^{\top}), A\in\mathbb{R}^{3\times 3}$.

\subsection{Graph Theory}

A graph is formally defined as $\mathcal{G}=(\mathcal{V},\mathcal{E})$, where $\mathcal{V}=\{1,2,\ldots,N\}$ is a set of vertices, $\mathcal{E}\subset\mathcal{V}\times\mathcal{V}$ is a set of edges of $\mathcal{G}$ with an edge between $i$ and $j$ being denoted as $(i,j)$.
The neighborhood $\mathcal{N}(i)$ of the vertex $i$ is defined as $\mathcal{N}(i)=\{j\in\mathcal{V}|(j,i)\in\mathcal{E}\}$.
A sequence of distinct vertices $(i_1,i_2,\ldots,i_m)$ is said to be a path of length $m$ in $\mathcal{G}$ if $(i_k,i_{k+1})\in\mathcal{E}$ holds for all $k=1,\ldots,m-1$.
A graph $\mathcal{G}$ is undirected if for any vertices $i,j\in\mathcal{V}$, $(i,j)\in\mathcal{E}$ implies $(j,i)\in\mathcal{E}$, otherwise, $\mathcal{G}$ is directed.
An undirected graph $\mathcal{G}$ is said to be connected if there exists a path between each pair of distinct vertices in $\mathcal{G}$.
An undirected graph $\mathcal{G}$ contains a cycle if there exists a path from one vertex to itself.
An undirected graph $\mathcal{G}$ is an undirected tree if any two vertices are connected by exactly one path (\textit{i.e.,}
an undirected tree is an undirected, connected, and acyclic graph).

\section{Problem formulation}\label{sec3}

Consider a group of $N$ heterogeneous\footnote{The rigid body systems do not have necessarily the same inertia matrix.} rigid body systems governed by the following rotational dynamics:
\begin{equation}\label{e3.1}
\left\{
\begin{array}{rcl}
\dot{R}_i&=&R_i\omega_i^{\times},\\
J_i\dot{\omega}_i&=&-\omega_i^\times J_i\omega_i+\tau_i,\\
\end{array}
\right.
\end{equation}
for $i=1,\ldots,N$, where $R_i\in SO(3)$ is the attitude of the $i$-th rigid body and $\omega_i\in\mathbb{R}^3$ the angular velocity of the $i$-th rigid body with respect to the inertial frame expressed in the body-attached frame, $J_i\in\mathbb{R}^{3\times 3}$ is the positive definite inertia matrix of the $i$-th rigid body, and $\tau_i\in\mathbb{R}^3$ is the control torque of the $i$-th rigid body.
The interconnection between the rigid body systems is represented by an undirected graph $\mathcal{G}=(\mathcal{V},\mathcal{E})$, where $\mathcal{V}=\{1,2,\ldots,N\}$ represents the set of rigid body systems, $\mathcal{E}\subset\mathcal{V}\times\mathcal{V}$ is the edge set with $(i,j)\in\mathcal{E}$ indicating that the $j$-th rigid body receives the information from the $i$-th rigid body. We assume that $\mathcal{G}$ is an undirected tree.

The objective of this paper is to design a decentralized control torque $\tau_i$  such that the attitude of the $i$-th rigid body is synchronized to the desired attitude $R_0$, \emph{i.e.}, $\lim_{t\rightarrow\infty}R_i=R_0$ and $\lim_{t\rightarrow\infty}\omega_i=\mathbf{0}_3$, for all $i=1,\ldots,N$. The desired attitude $R_0$ is assumed to be constant and available to only one rigid body, denoted by $1$. The desired attitude $R_0$ is represented by a virtual leader $0$. The rigid body systems $i=1,\ldots,N$ are regarded as followers. The relative attitude between the $1$-st rigid body and the desired orientation $R_0$ is denoted by $\bar{R}_{1,0}=R_0^{\top}R_1$, leading to $\dot{\bar{R}}_{1,0}=\bar{R}_{1,0}\omega_1^{\times}$.

\section{Main result}\label{sec4}

For an undirected graph $\mathcal{G}$, there exist two edges between the $i$-th rigid body and the $j$-th rigid body, \emph{i.e.}, $(i,j),(j,i)$.
It is worth mentioning that the convergence of $R_i^{\top}R_j$ implies the convergence of $R_j^{\top}R_i$, and vice versa.
This allows us to specify an arbitrary virtual direction for any two interconnected rigid body systems in the graph $\mathcal{G}$ for stability analysis (see, for instance, \cite{Arcak2008,Bough_Tay_2025}).
For the stability analysis that will be conducted later, we will arbitrarily remove one of the edges between any two interconnected rigid body systems, and denote the rest of the edge set  $\mathcal{E}$ as $\overline{\mathcal{E}}$, and obtain a new graph $\overline{\mathcal{G}}=\{\mathcal{V},\overline{\mathcal{E}}\}$ with a virtual orientation specified in the graph $\mathcal{G}$.
According to the obtained graph $\overline{\mathcal{G}}$, the edge $(j,i)\in\overline{\mathcal{E}}$ is denoted as the $k$-th edge in the graph $\mathcal{G}$. The relative attitude between the rigid body systems $i,j$, described by edge $k$, is defined as $\bar{R}_{k}=R_j^{\top}R_i$.
It follows that
\[\dot{\bar{R}}_{k}=\bar{R}_{k}\bar{\omega}_{k}^{\times},\]
where $\bar{\omega}_{k}=\omega_i-\bar{R}_{k}^{\top}\omega_j$.
It is evident that $|\overline{\mathcal{E}}|=N-1$ due to the fact that the graph $\mathcal{G}$ is an undirected tree.
Denoting $\bar{R}=\mathrm{diag}(\bar{R}_1,\ldots,\bar{R}_{N-1}),\overline{W}=\mathrm{diag}(\bar{\omega}_1^{\times},\ldots,\bar{\omega}_{N-1}^{\times})$, one has
\begin{equation}\label{e4.2}
\begin{array}{l}
\dot{\bar{R}}=\bar{R}\overline{W}.\\
\end{array}
\end{equation}
The angular velocity dynamics in \eqref{e3.1} can be rewritten in the following compact form:
\begin{equation}\label{ane4.2}
\begin{array}{l}
J\dot{\omega}=-WJ \omega+\tau,\\
\end{array}
\end{equation}
where $J=\mathrm{diag}(J_1,\ldots,J_N)$, $W=\mathrm{diag}(\omega_1^{\times},\ldots,\omega_N^{\times})$, $\omega=[\omega_1^{\top}\ \cdots\ \omega_N^{\top}]^{\top}$ and $\tau=[\tau_1^{\top}\ \cdots\ \tau_N^{\top}]^{\top}$.

Denote the set of edges in the graph $\overline{\mathcal{G}}$ as $\mathcal{M}=\{1,\ldots,N-1\}$.
Let us define the matrix
$L\in\mathbb{R}^{3N\times 3(N-1)}$ with each $3$ by $3$ block $L_{ik}$ being defined as
\begin{equation}\label{L_matrix}
L_{ik}=\left\{
\begin{array}{ll}
-\bar{R}_k,&
k\in\mathcal{M}_i^{+},\\
I_3,&k\in\mathcal{M}_i^{-},\\
\mathbf{0}_{3\times 3},&\mathrm{otherwise},
\end{array}
\right.
\end{equation}
where $\mathcal{M}_i^+\subset \mathcal{M}$ is the set of edges in the graph $\overline{\mathcal{G}}$ with $i$ being the head of the edges, $\mathcal{M}_i^- \subset \mathcal{M}$ is the set of edges in the graph $\overline{\mathcal{G}}$ with $i$ being the tail of the edges.

Now, we state the following lemma that will be used in the proof of our theorem.
\begin{lemma}\label{le1}
Consider the matrix $L\in\mathbb{R}^{3N\times 3(N-1)}$ associated to the interconnection graph $\mathcal{G}$, with the block $L_{ik}$ defined in \eqref{L_matrix}.
Then $L_2x=\mathbf{0}_{3(N-1)}$ implies $x=\mathbf{0}_{3(N-1)}$, where
$L_2$ is obtained from $L$ by removing the first three rows, \textit{i.e.},
\[
L_2=\left[
\begin{array}{cccc}
L_{21}&L_{22}&\cdots&L_{2,N-1}\\
L_{31}&L_{32}&\cdots&L_{3,N-1}\\
\vdots&\vdots&\ddots&\vdots\\
L_{N1}&L_{N2}&\cdots&L_{N,N-1}\\
\end{array}
\right]_.
\]
\end{lemma}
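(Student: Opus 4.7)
The plan is to recognize that $L_2 \in \mathbb{R}^{3(N-1)\times 3(N-1)}$ is square, so proving $L_2 x = \mathbf{0} \Rightarrow x = \mathbf{0}$ is equivalent to proving $L_2$ is invertible. Partitioning $x = [x_1^\top\ \cdots\ x_{N-1}^\top]^\top$ with $x_k\in\mathbb{R}^3$, the equation $L_2 x = \mathbf{0}$ breaks into one block equation per follower vertex $i\in\{2,\ldots,N\}$, namely $\sum_{k\in\mathcal{M}_i^{-}} x_k - \sum_{k\in\mathcal{M}_i^{+}} \bar{R}_k x_k = \mathbf{0}_3$. So $L_2$ plays the role of a ``reduced block incidence matrix'' of the tree $\overline{\mathcal{G}}$ obtained by deleting the row of the root vertex $1$, with signed identity blocks replaced by $I_3$ or $-\bar{R}_k$.

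I would proceed by induction on $N$, peeling off leaves. For $N=2$ there is a single edge $k$, and $L_2$ reduces to either $I_3$ or $-\bar{R}_k$; both are invertible on $SO(3)$, so the claim holds. For the inductive step, since $\overline{\mathcal{G}}$ is a tree on $N\geq 2$ vertices, it has at least two leaves; in particular, there exists a leaf $\ell\neq 1$. Let $k^\star$ be the unique edge of $\overline{\mathcal{G}}$ incident to $\ell$. Then, in the row block of $L_2$ associated with $\ell$, exactly one block is nonzero, equal to $I_3$ (if $k^\star\in\mathcal{M}_\ell^{-}$) or to $-\bar{R}_{k^\star}$ (if $k^\star\in\mathcal{M}_\ell^{+}$). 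Either block is invertible, and that row block of the equation $L_2 x = \mathbf{0}$ forces $x_{k^\star} = \mathbf{0}_3$.

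With $x_{k^\star} = \mathbf{0}_3$ established, I would delete vertex $\ell$ and edge $k^\star$ from $\overline{\mathcal{G}}$ to obtain a graph $\overline{\mathcal{G}}'$ that is again an undirected tree, this time on $N-1$ vertices and $N-2$ edges, still containing the root $1$. Because $x_{k^\star}$ is zero and no other row equation of $L_2$ uses the block column $k^\star$ except via multiplication by $0$, the surviving equations (indexed by the followers of $\overline{\mathcal{G}}'$) in the surviving unknowns $\{x_k\}_{k\neq k^\star}$ are exactly the system $L_2' x' = \mathbf{0}$ for the analogous matrix $L_2'$ of $\overline{\mathcal{G}}'$. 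The induction hypothesis gives $x_k = \mathbf{0}_3$ for every remaining edge, hence $x = \mathbf{0}_{3(N-1)}$.

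The step that requires the most care is the first one: guaranteeing that a leaf distinct from vertex $1$ can always be chosen, so that the ``peeling'' is driven by a follower whose row survives in $L_2$. This is where the acyclicity assumption on $\mathcal{G}$ (equivalently, that $\overline{\mathcal{G}}$ is a tree with at least two leaves when $N\geq 2$) is essential, and it is also what rules out cycles that could otherwise produce a nontrivial kernel built from products of the $\bar{R}_k$'s. Everything else is bookkeeping, together with the elementary fact that $I_3$ and any $\bar{R}_k\in SO(3)$ are invertible.
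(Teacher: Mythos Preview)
Your proof is correct and takes a genuinely different route from the paper's. The paper argues by contradiction on the transpose: assuming $L_2^\top x = \mathbf{0}$ with $x=[x_2^\top\ \cdots\ x_N^\top]^\top\neq\mathbf{0}$, each edge $k$ incident to vertex~$1$ produces a single-term block equation $L_{ik}^\top x_i = \mathbf{0}_3$ (because the row block for vertex~$1$ has been removed), forcing $x_i=\mathbf{0}_3$ for every $i\in\mathcal{N}(1)$; these zeros are then propagated outward along the unique tree paths using the two-term edge relations $L_{ik}^\top x_i + L_{jk}^\top x_j = \mathbf{0}_3$. You instead work with $L_2$ directly and proceed by leaf-peeling induction on $N$, using that a tree on $N\geq 2$ vertices always has a leaf $\ell\neq 1$, whose row block in $L_2$ contains a single invertible block. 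Both arguments rest on the same two ingredients---acyclicity of $\overline{\mathcal{G}}$ and invertibility of every nonzero block $L_{ik}\in\{I_3,-\bar{R}_k\}$---but yours is the cleaner and more standard tree reduction and avoids the detour through $L_2^\top$; the paper's path-propagation argument, on the other hand, makes more explicit why it is precisely the row of vertex~$1$ that can be dropped without losing injectivity.
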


\begin{proof}
This result is proven by showing that $L_2$ is nonsingular.
Assume that $\mathrm{rank}(L_2)<3(N-1)$, equivalently, $\mathrm{rank}(L_2^{\top})<3(N-1)$.
Let $x=[x_2^{\top}\ \cdots\ x_{N}^{\top}]^{\top}\neq\mathbf{0}_{3(N-1)}$ be one of the solutions of $L_2^{\top}x=\mathbf{0}_{3(N-1)}$, where $x_i\in\mathbb{R}^3,i\in\{2,3,\ldots,N\}$.
For $i\in\mathcal{V}\setminus\{1\}$ and $j\in\mathcal{N}(i)\setminus(\mathcal{N}(1)\cup\{1\})$, there exists $k\in\{1,\ldots,N-1\}$ such that $L_{ik}\neq\mathbf{0}_{3\times 3},L_{jk}\neq\mathbf{0}_{3\times 3}$ and $L_{lk}=\mathbf{0}_{3\times 3}$ for $l\in\mathcal{V}\setminus\{1,i,j\}$.
It follows from $L_2^{\top}x=\mathbf{0}_{3(N-1)}$ that $L^{\top}_{ik}x_i+L^{\top}_{jk}x_j+\sum\limits_{l\in\mathcal{V}\setminus\{1,i,j\}}L^{\top}_{lk}x_l=L^{\top}_{ik}x_i+L^{\top}_{jk}x_j=\mathbf{0}_3$ for $i\in\mathcal{V}\setminus\{1\},j\in\mathcal{N}(i)\setminus(\mathcal{N}(1)\cup\{1\})$.
Since $L_{ik}$ and $L_{jk}$ are nonsingular, one has $x_j=-(L^{\top}_{jk})^{-1}L_{ik}^{\top}x_i$ for $i\in\mathcal{V}\setminus\{1\},j\in\mathcal{N}(i)\setminus(\mathcal{N}(1)\cup\{1\})$.
If $i\in\mathcal{N}(1)$, then there exists $k_0\in\{1,\ldots,N-1\}$, satisfying $k_0\neq k$, such that $L_{ik_0}\neq\mathbf{0}_{3\times 3}$ and $L_{lk_0}=\mathbf{0}_{3\times 3}$ for $l\in\mathcal{V}\setminus\{1,i\}$.
According to $L_2^{\top}x=\mathbf{0}_{3(N-1)}$, one has $L^{\top}_{ik_0}x_i+\sum\limits_{l\in\mathcal{V}\setminus\{1, i\}}L^{\top}_{lk_0}x_l=L^{\top}_{ik_0}x_i=\mathbf{0}_3$.
Due to the fact that $L_{ik}$ is nonsingular, it follows that $x_i=\mathbf{0}_3$, leading to $x_j=-(L^{\top}_{jk})^{-1}L_{ik}^{\top}x_i=\mathbf{0}_3$.
If $i\notin\mathcal{N}(i)$,
one can find a path between the $l_0$-th rigid body and the $i$-th rigid body since the interconnection graph $\mathcal{G}$ is an undirected tree, where $l_0\in\mathcal{N}(1)$.
It is clear that $x_{l_0}=\mathbf{0}_3$.
Denote the path between the $l_0$-th rigid body and the $i$-th rigid body as $(l_0,l_1,\ldots,l_s,i)$.
Let $(l_0,l_1),\ldots,(l_s,i)$ be described by edges $k_0,\ldots,k_{s}$ respectively.
It is clear that $x_i=(-1)^{s+1}(L^{\top}_{ik_s})^{-1}L^{\top}_{l_sk_s}\cdots(L_{l_1k_0}^{\top})^{-1}L_{l_0k_0}^{\top}x_{l_0}$
$=\mathbf{0}_3$, leading to $x_j=-(L^{\top}_{jk})^{-1}L_{ik}^{\top}x_i=\mathbf{0}_3$.
Hence, we obtain $x=\mathbf{0}_{3(N-1)}$, which is a contradiction.
Consequently, one has $\mathrm{rank}(L_2)=\mathrm{rank}(L_2^{\top})=3(N-1)$, allowing us to conclude that $L_2x=\mathbf{0}_{3(N-1)}$ implies $x=\mathbf{0}_{3(N-1)}$.

\end{proof}

We propose the following distributed control torque $\tau_i$:
\begin{equation}\label{control}
\tau_i=-k_{\bar{R}_0}\psi(A_{i,0}R_0^{\top}R_i)-k_{\bar{R}}\sum\limits_{j\in\mathcal{N}(i)}\psi(A_{i,j}R_j^{\top}R_i)-k_{\omega}\omega_i,
\end{equation}
for each rigid body $i\in\mathcal{V}$,
where
$k_{\bar{R}_0},k_{\bar{R}},k_{\omega}>0$, $A_{i,j}=A_{i,j}^{\top}>0$ with three distinct eigenvalues and  $A_{i,j}=A_{j,i}$, $A_{1,0}=A_{1,0}^{\top}>0$ with three distinct eigenvalues,  $A_{i,0}=\mathbf{0}_{3\times 3},\forall i\in\mathcal{V}\setminus\{1\}$.

As in \cite{bough_Berk_Tay_Arxiv2025}, using the facts $\psi(AR)=-\psi(R^{\top}A)$ and $\psi(AR)=R^{\top}\psi(RA)$, $A=A^{\top}\in\mathbb{R}^{3\times 3}, R\in SO(3)$, one has
\begin{equation}\label{transformation}
\begin{array}{l}
\sum\limits_{j\in\mathcal{N}(i)}\psi(A_{i,j}R_j^{\top}R_i)=\sum\limits_{j\in\mathcal{I}_i}\psi(A_{i,j}R_j^{\top}R_i)\\
\ \ \ \ \ \ \ \ \ \ \ \ \ \ \ \ \ \ \ \ \ \ \ \ \ \ \ \ \ +\sum\limits_{j\in\mathcal{O}_i}\psi(A_{i,j}R_j^{\top}R_i)\\
\ =\sum\limits_{j\in\mathcal{I}_i}\psi(A_{i,j}R_j^{\top}R_i)-\sum\limits_{j\in\mathcal{O}_i}\psi(R_i^{\top}R_jA_{i,j})\\
\ =\sum\limits_{j\in\mathcal{I}_i}\psi(A_{i,j}R_j^{\top}R_i)-\sum\limits_{j\in\mathcal{O}_i}R_i^{\top}R_j\psi(A_{i,j}R_i^{\top}R_j)\\
\ =\sum\limits_{k=1}^{N-1}L_{ik}\psi(A_{k}\bar{R}_k),\\
\end{array}
\end{equation}
where $\mathcal{I}_i=\{j\in \mathcal{N}(i)~|~(j,i) \in \overline{\mathcal{E}}\}$, $\mathcal{O}_i=\{j\in \mathcal{N}(i)~|~(i,j) \in \overline{\mathcal{E}}\}$ and $A_k=A_{i,j}$ for $i,j\in\mathcal{V}$ with the edge between $i$ and $j$ being described by edge $k$.
Based on this, the control $\tau$ in \eqref{control} can be written as 
\begin{equation}\label{controlcompact}
\tau=-k_{\tilde{R}_0}\Psi_0-k_{\bar{R}}L\bar{\Psi}-k_{\omega}\omega,\end{equation}
where $\Psi_0=[\psi^{\top}(A_{0}\bar{R}_{1,0})~ \mathbf{0}_{3(N-1)}^{\top}]^{\top}$, $\bar{\Psi}=[\psi^{\top}(A_{1}\bar{R}_1)\ \cdots$
$\psi^{\top}(A_{N-1}\bar{R}_{N-1})]^{\top}$, $A_0=A_{1,0}$.
The closed-loop system is given by:
\begin{equation}\label{closedloop}
\left\{
\begin{array}{rcl}
\dot{\bar{R}}_{1,0}&=&\bar{R}_{1,0}\omega_1,\\
\dot{\bar{R}}&=&\bar{R}\overline{W},\\
J\dot{\omega}&=&-WJ \omega-k_{\bar{R}_0}\Psi_0-k_{\bar{R}}L\bar{\Psi}-k_{\omega} \omega.\\
\end{array}
\right.
\end{equation}

Now, one can state our main result in the following theorem:
\begin{theorem}\label{th1}
Consider a network of $N$ rigid body systems, whose attitude dynamics are given in \eqref{e3.1}.
Let the control torque $\tau_i$ be given by \eqref{control}. Assume that the interconnection graph $\mathcal{G}$ is an undirected tree. Then the following statements hold.
\begin{enumerate}
\item
The trajectories of the closed-loop system \eqref{closedloop} converge to the following set of equilibra $\mathcal{C}_v=\Xi^d\cup \Xi^u$, where
$\Xi^d=\{(\bar{R}_{1,0}=I_{3},\bar{R}=I_{3(N-1)},\omega=\mathbf{0}_{3N})\}$ and
$\Xi^{u}=\{(\bar{R}_{1,0}=\bar{R}_{0}^{\ast},\bar{R}=\mathrm{diag}(\bar{R}_1^{\ast},\ldots,\bar{R}_{N-1}^{\ast}),\omega=\mathbf{0}_{3N})~|~\bar{R}_{p}^{\ast}=I_3,\forall p\in\mathcal{M}^{I},\bar{R}_q^{\ast}=\mathcal{R}(\pi,\bar{u}_q),\forall q\in\mathcal{M}^{\pi},
\bar{u}_q\in\mathcal{E}_{\bar{v}_q}^{\mathbb{R}}\}$
with $\mathcal{M}^{I}\cup\mathcal{M}^{\pi}=\mathcal{M}\cup\{0\}$, $\mathcal{M}^{\pi}\neq \emptyset$,  
$\mathcal{E}_{\bar{v}_i}^{\mathbb{R}}=\{\bar{v}_i\in\mathbb{R}^{3}|\exists \lambda, s.t., A_{i}\bar{v}_{i}=\lambda \bar{v}_i,\|\bar{v}_i\|=1\},i=0,1,\ldots,N-1$.

\item The undesired equilibrium points in $\Xi^u$ are unstable and the associated stable manifold has zero Lebesgue measure. The desired equilibrium $\Xi^d$ is almost globally asymptotically stable.
\item All the rigid body attitudes synchronize to the desired attitude $R_0$, \textit{i.e.},  $\lim_{t\rightarrow\infty}R_i(t)=R_0$ and $\lim_{t\rightarrow\infty}\omega_i(t)=\mathbf{0}_3$ for all $i\in\mathcal{V}$.
\end{enumerate}
\end{theorem}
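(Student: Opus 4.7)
The plan is to address the three items of the theorem through a single Lyapunov--LaSalle analysis. The candidate I would use is
\[
V=\tfrac{1}{2}k_{\bar{R}_0}\mathbf{tr}\!\bigl(A_0(I_3-\bar{R}_{1,0})\bigr)+\tfrac{1}{2}k_{\bar{R}}\sum_{k=1}^{N-1}\mathbf{tr}\!\bigl(A_k(I_3-\bar{R}_k)\bigr)+\tfrac{1}{2}\omega^{\top}J\omega,
\]
which is nonnegative on $SO(3)^{N}\times\mathbb{R}^{3N}$ and radially unbounded in $\omega$. Applying the identity $\tfrac{d}{dt}\mathbf{tr}(A(I_3-R))=2\psi(AR)^{\top}\omega_R$ (valid for $\dot R=R\omega_R^{\times}$ and symmetric $A$), the reformulation \eqref{transformation} that turns $\sum_{k}\psi(A_k\bar R_k)^{\top}\bar\omega_k$ into $\omega^{\top}L\bar\Psi$, and the skew-symmetry of each $\omega_i^{\times}$ that kills the gyroscopic term $\omega^{\top}WJ\omega$, every cross term produced by differentiating the trace potentials cancels exactly against one generated by $\tau$ in \eqref{controlcompact}. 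The remainder is $\dot V=-k_{\omega}\|\omega\|^{2}\le 0$, so LaSalle's invariance principle drives every trajectory into the largest invariant set on which $\omega\equiv 0$.

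On that set, $\dot\omega\equiv 0$ in \eqref{closedloop} yields $k_{\bar{R}_0}\Psi_0+k_{\bar{R}}L\bar\Psi=\mathbf{0}_{3N}$; the last $3(N-1)$ rows read $L_2\bar\Psi=\mathbf{0}$, so Lemma~\ref{le1} forces $\bar\Psi=\mathbf{0}$, and the first three rows then give $\psi(A_0\bar R_{1,0})=\mathbf{0}$. To characterize the zeros of $\psi(A_p\bar R_p)$, I would substitute the angle-axis parametrization \eqref{angleaixs} and use the linearity of $\psi$ together with the identities $\psi(A)=\mathbf{0}$ for symmetric $A$, $\psi(Au^{\times})=\mathbf{E}(A)u$ (which follows from \eqref{cross}), and $\psi(A(u^{\times})^{2})=\tfrac{1}{2}u\times(Au)$ to obtain
\[
\psi\!\bigl(A_p\,\mathcal R(\theta,u)\bigr)=\sin\theta\,\mathbf{E}(A_p)u+\tfrac{1-\cos\theta}{2}\,u\times(A_p u).
\]
Because $\mathbf{E}(A_p)u=\tfrac12(\mathbf{tr}(A_p)u-A_p u)$ lies in $\mathrm{span}(u,A_p u)$ while $u\times(A_p u)$ is orthogonal to that plane, the two summands are perpendicular; hence the sum vanishes only when each summand does. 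The distinct-positive-eigenvalue hypothesis on $A_p$ excludes $\mathbf{E}(A_p)u=\mathbf{0}$, so $\sin\theta=0$, leaving either $\theta=0$ (giving $\bar R_p=I_3$) or $\theta=\pi$, in which case $u\times(A_p u)=\mathbf{0}$ forces $u$ to be a unit eigenvector of $A_p$. This reproduces the decomposition $\mathcal C_v=\Xi^d\cup\Xi^u$ asserted in item~1.

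Local asymptotic stability of $\Xi^d$ follows from $\mathbf{tr}(A_p(I_3-\mathcal R(\theta,u)))=(1-\cos\theta)(\mathbf{tr}(A_p)-u^{\top}A_p u)$ being strictly positive for $\theta\in(0,2\pi)$, which makes $V$ locally positive definite at $\Xi^d$; combined with the damping and LaSalle this yields local attraction. The hardest step will be item~2, namely establishing instability of every equilibrium in $\Xi^u$ with a zero-measure combined stable manifold. I plan to linearize \eqref{closedloop} at a generic equilibrium in $\Xi^u$ using the chart $\bar R_q=\mathcal R(\pi,\bar u_q)\exp(\delta_q^{\times})$ and read the potential Hessian off the quadratic expansion of $\mathbf{tr}(A_q(I_3-\bar R_q))$ around $\delta_q=\mathbf{0}$. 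A direct calculation in the eigenbasis of $A_q$ shows that the coefficient along the rotation axis $\bar u_q$ equals $\tfrac{1}{2}(\lambda_q-\mathbf{tr}(A_q))$, which is strictly negative since the other two eigenvalues of $A_q$ are positive; this exhibits at least one strictly negative direction of the Hessian at each equilibrium in $\Xi^u$. Combined with the velocity damping $-k_{\omega}\omega$, the standard mechanical-system Jacobian structure then yields at least one eigenvalue with strictly positive real part, so the stable manifold theorem applied at each of the finitely many points of $\Xi^u$ gives local stable manifolds of codimension $\ge 1$ whose finite union has zero Lebesgue measure, establishing item~2. Item~3 is then immediate: for almost all initial conditions, $\bar R_{1,0}\to I_3$ implies $R_1\to R_0$ and $\bar R_k\to I_3$ on every edge, which by the tree-connectedness of $\mathcal G$ propagates to $R_i\to R_0$ for all $i\in\mathcal V$ while $\omega\to\mathbf{0}_{3N}$.
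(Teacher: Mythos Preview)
Your approach is essentially the paper's: the same Lyapunov function, the same $\dot V=-k_\omega\|\omega\|^2$ computation, the same LaSalle/Lemma~\ref{le1} decoupling for item~1, and the same linearization-plus-stable-manifold strategy for item~2. Your explicit angle--axis derivation of the zeros of $\psi(A_p\bar R_p)$ is a clean self-contained alternative to the paper's citation of an external lemma, and your identification of the negative Hessian direction along $\bar u_q$ (coefficient $\tfrac12(\lambda_q-\mathbf{tr}(A_q))<0$) matches what the paper obtains via its Chetaev function $\overline V$.

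The one place your sketch is thinner is the sentence ``the standard mechanical-system Jacobian structure then yields at least one eigenvalue with strictly positive real part.'' This is true here but not automatic: the linearized closed loop is \emph{not} of the form $M\ddot q+C\dot q+Kq=0$ with symmetric $M,C$, because the kinematics $\dot q=B\omega$ (with $B$ assembled from $[I_3\;0]$ and $\bar L^{\top}$) sit between configuration and velocity, so the Kelvin--Tait--Chetaev shortcut does not apply directly. The paper closes this gap by verifying explicitly that the Jacobian $\mathbf M$ has no eigenvalue on the imaginary axis: it rules out $\lambda=0$ by reusing Lemma~\ref{le1} together with the nonsingularity of each $\mathbf E(A_k\bar R_k^{\ast})$, and rules out $\lambda=i\alpha$, $\alpha\neq 0$, by left-multiplying the eigenvector relation by $z_3^{\top}J$ and reading off $k_\omega\|z_3\|^2=0$ from the real part. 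Only with this hyperbolicity in hand does the stable-manifold theorem give the codimension-$\geq 1$ (hence measure-zero) stable set you want; without it you would need the center-stable manifold instead, since converging trajectories need not lie in the strong stable manifold when center directions are present. Once you supply that check, your argument is complete and coincides with the paper's.
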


\begin{proof}
Let us prove the first item.
Consider the following potential function:
\[\begin{array}{rcl}
V&=&\frac{1}{2}(k_{\bar{R}_0}\mathbf{tr}(A_{0}(I_{3}-\bar{R}_{1,0}))+\omega^{\top}J\omega\\
\ &\ &\ \ \ \ \ +k_{\bar{R}}\mathbf{tr}(\mathbf{A}(I_{3(N-1)}-\bar{R}))).
\end{array}\]
where $\mathbf{A}=\mathrm{diag}(A_1,\ldots,A_{N-1})$.
Using the facts that $\text{tr}\left(M[x]^\times\right)=\text{tr}\left(\mathbb{P}_a(M)[x]^\times\right)$, $\text{tr}\left([x]^\times [y]^\times\right)=-2x^{\top}y$, $\forall x, y\in \mathbb{R}^3$ and $\forall M \in \mathbb{R}^{3\times3}$, one can show that $\frac{d}{dt}\mathbf{tr}(A_{k}(I_3-\bar{R}_{k}))=2\bar{\omega}^{\top}_k\psi(A_{k}\bar{R}_{k})$, and $\frac{d}{dt}\mathbf{tr}(A_{0}(I_3-\bar{R}_{1,0}))=2\omega^{\top}_1\psi(A_{0}\bar{R}_{1,0})$. Therefore, the time-derivative of $V$, in view of \eqref{ane4.2}, is given by
\begin{equation}\label{dot_V}
\begin{array}{rcl}
\dot{V}&=&k_{\bar{R}_0}\omega_1^{\top}\psi(A_{0}\bar{R}_{1,0})+\omega^{\top}J\dot{\omega}+k_{\bar{R}}\sum\limits_{k=1}^{N-1}\bar{\omega}_{k}^{\top}\psi(A_k\bar{R}_k)\\
\ &=&k_{\bar{R}_0}\omega^{\top}\Psi_0+\omega^{\top}(-WJ\omega+\tau)+k_{\bar{R}}\bar{\omega}^{\top}\bar{\Psi}\\
\ &=&\omega^{\top}(k_{\bar{R}_0}\Psi_0+\tau)+k_{\bar{R}}\bar{\omega}^{\top}\bar{\Psi},\\
\end{array}
\end{equation}
where $\bar{\omega}=[\bar{\omega}_1^{\top}\ \cdots\ \bar{\omega}_{N-1}^{\top}]^{\top}$.
One can verify that $\overline{\omega}=L^{\top}\omega$, leading to $\dot{V}=\omega^{\top}(k_{\bar{R}_0}\Psi_0+\tau)+k_{\bar{R}}\omega^{\top}L\bar{\Psi}$.
Substituting \eqref{controlcompact} 
into $\dot{V}$, one has $\dot{V}=-k_{\omega}\|\omega\|^2\leq 0$.
Hence, the desired equilibrium point $\Xi^d$ is stable.

As per LaSalle's invariance principle, all the trajectories of the closed-loop system \eqref{closedloop} must converge to largest invariant set characterized by $\dot{V}=0$. From $\dot{V}=0$, one has $\omega=\mathbf{0}_{3N}$ which implies $\dot{\omega}=\mathbf{0}_{3N}$. Consequently, from \eqref{closedloop}, one has
\begin{equation}\label{psi}
k_{\bar{R}_0}\Psi_0+k_{\bar{R}}L\bar{\Psi}=\mathbf{0}_{3N}.
\end{equation}
Denote the first $3$ rows of $L$ as $L_1$, \textit{i.e.}, $L=[L_1^{\top}\ L_2^{\top}]^{\top}$.
One can see that
\begin{equation}\label{pside}
\begin{array}{l}
k_{\bar{R}_0}\psi(A_{0}\bar{R}_{1,0})+k_{\bar{R}}L_1\bar{\Psi}=\mathbf{0}_{3},\\
k_{\bar{R}}L_2\bar{\Psi}=\mathbf{0}_{3(N-1)}.\\
\end{array}
\end{equation}
In terms of Lemma \ref{le1}, one has $\bar{\Psi}=\mathbf{0}_{3(N-1)}$, resulting in $\psi(A_{0}\bar{R}_{1,0})=\mathbf{0}_3$, or equivalently, $A_{0}\bar{R}_{1,0}=\bar{R}_{1,0}A_{0}$.
It follows from $\bar{\Psi}=\mathbf{0}_{3(N-1)}$ that
$\psi(A_k\bar{R}_k)=\mathbf{0}_{3}$, or equivalently, $A_k\bar{R}_k=\bar{R}_kA_k$, where $k\in\{1,\ldots,N-1\}$.
Therefore, using, for instance, Lemma 2 in \cite{Mayhew2011}, one can show that $\bar{R}_k$ and $\bar{R}_{1,0}$ converges to $\{I_3\}\cup\{\mathcal{R}(\pi,v)|v\in\mathcal{E}_{\bar{v}_k}^{\mathbb{R}}\}$ and $\{I_3\}\cup\{\mathcal{R}(\pi,v)|v\in\mathcal{E}_{\bar{v}_0}^{\mathbb{R}}\}$ respectively, where $k\in\{1,\ldots,N-1\}$.
Consequently, the trajectories of the closed-loop system \eqref{closedloop} converge to $\mathcal{C}_v$.

Now we prove the second item of the theorem.
We start by using Chetaev's theorem \cite{khalil2002nonlinear} to show the unstability of the undesired equilibria $\Xi^u$.
Let us consider an undesired equilibrium point in $\Xi^{u}$
and define the following real-valued function:
\begin{equation}\label{bar_V}
\overline{V}=\sum\limits_{q\in\mathcal{M}^{\pi}}k_q(\mathbf{tr}(A_q)-\lambda_{\bar{u}_q}^{A_q})-V,
\end{equation}
where
\[
k_q=\left\{
\begin{array}{ll}
 k_{\bar{R}_0},   & q=0, \\
 k_{\bar{R}},    & \mathrm{otherwise}.
\end{array}
\right.
\]
The time-derivative of $\overline{V}$ is given by
\[
\dot{\overline{V}}=-\dot{V}=k_{\omega}\|\omega\|^2>0.
\]

To show that there exists a domain, around the undesired equilibrium point, where $\overline{V}$ is positive definite, we first  provide a new expression of $\overline{V}$ in terms of the angle-axis representation, 
and then show that the undesired equilibrium points in $\Xi^{u}$ are saddle points for $\overline{V}$ by verifying that the Hessian matrix of $\overline{V}$ is an indefinite matrix.

Let $\bar{R}_{1,0}=I_3+\sin(\theta_{1,0})\mu_{1,0}^{\times}+(1-\cos(\theta_{1,0}))(\mu_{1,0}^{\times})^2$ and $\bar{R}_k=I_3+\sin(\theta_{k})\mu_k^{\times}+(1-\cos(\theta_{k}))(\mu_k^{\times})^2$, where $k\in\{1,\ldots,N-1\}$, $\theta_{1,0},\theta_k\in\mathbb{R}$ are rotation angles of $\bar{R}_{1,0}$ and $\bar{R}_k$ respectively, $\mu_{1,0},\mu_k\in\mathbb{S}^{2}$ are rotation axes of $\bar{R}_{1,0}$ and $\bar{R}_k$ respectively.
As per Lemma 2.2.5 in \cite{BerkaneSoulaimanephdtheis},
it follows that
\begin{equation}\label{trunit}
\begin{array}{rcl}
\mathbf{tr}(A_{0}(I-\bar{R}_{1,0}))&=&2(1-\cos(\theta_{1,0}))\mu_{1,0}^{\top}\mathbf{E}(A_{0})\mu_{1,0},\\
\mathbf{tr}(A_k(I-\bar{R}_{k}))&=&2(1-\cos(\theta_k))\mu_k^{\top}\mathbf{E}(A_k)\mu_k,\\
\end{array}
\end{equation}
Substituting \eqref{trunit} into \eqref{bar_V}, one has
\[\begin{array}{rcl}
\overline{V}&=&\sum\limits_{q\in\mathcal{M}^{\pi}}k_{q}(\mathbf{tr}(A_q)-\lambda_{\bar{u}_q}^{A_q})-k_{\bar{R}}\mu^{\top}(\nu\otimes I_3)\bar{\Omega}\mu\\
\ &\ &-k_{\bar{R}_0}(1-\cos(\theta_{1,0}))\mu_{1,0}^{\top}\mathbf{E}(A_{1,0})\mu_{1,0}-\frac{1}{2}\omega^{\top}J\omega,\\
\end{array}\]
where $\mu=[\mu_1^{\top}\ \cdots\ \mu_{N-1}^{\top}]^{\top}$, $\bar{\Omega}=\mathrm{diag}(\mathbf{E}(A_1),\ldots,$
$\mathbf{E}(A_{N-1}))$, $\nu=\mathrm{diag}(1-\cos(\theta_1),\ldots,1-\cos(\theta_{N-1}))$.
Since $u_1\in\mathcal{E}_{v_1}^{\mathbb{R}}$ and $\bar{u}_k\in\mathcal{E}_{\bar{v}_k}^{\mathbb{R}}$ for $k\in\{1,\ldots,N-1\}$, one has
\[
\begin{array}{rcl}
\bar{u}_0^{\top}\mathbf{E}(A_{0})\bar{u}_0&=&\frac{1}{2}(\mathbf{tr}(A_{0})-\lambda_{\bar{u}_0}^{A_{0}}),\\
\bar{u}_k^{\top}\mathbf{E}(A_k)\bar{u}_k&=&\frac{1}{2}(\mathbf{tr}(A_{k})-\lambda_{\bar{u}_k}^{A_k}),\\
\end{array}
\]
where $k\in\{1,\ldots,N-1\}$.
It is clear that $\mathbf{tr}(A_p(I_3-\bar{R}_p^{\ast}))=0$ for $\forall p\in\mathcal{M}^I$.
It follows that $\overline{V}=0$ for $(\bar{R}_{1,0}=\bar{R}_{0}^{\ast},\bar{R}=\mathrm{diag}(\bar{R}_1^{\ast},\ldots,\bar{R}_{N-1}^{\ast}),\omega=\mathbf{0}_{3N})\in\Xi^{u}$.

One can obtain the Hessian matrix of $\overline{V}$ at $(\bar{R}_{1,0}=\bar{R}_{0}^{\ast},$
$\bar{R}=\mathrm{diag}(\bar{R}_1^{\ast},\ldots,\bar{R}_{N-1}^{\ast}),\omega=\mathbf{0}_{3N})\in\Xi^{u}$, as follows:
\[
\mathrm{Hess}(\overline{V})=\left[
\begin{array}{ccccc}
\mathbf{H}_{11}&\mathbf{0}&\mathbf{0}&\mathbf{0}&\mathbf{0}\\
\mathbf{0}&\mathbf{H}_{22}&\mathbf{0}&\mathbf{0}&\mathbf{0}\\
\mathbf{0}&\mathbf{0}&\mathbf{H}_{33}&\mathbf{0}&\mathbf{0}\\
\mathbf{0}&\mathbf{0}&\mathbf{0}&\mathbf{H}_{44}&\mathbf{0}\\
\mathbf{0}&\mathbf{0}&\mathbf{0}&\mathbf{0}&-\frac{1}{2}J\\
\end{array}
\right]
\]
where $\mathbf{H}_{11}=\frac{k_{\bar{R}_0}}{2}(\mathbf{tr}(A_0)-\lambda_{\bar{u}_0}^{A_{0}})$, $\mathbf{H}_{22}=-4k_{\bar{R}_0}\mathbf{E}(A_{0})$, $\mathbf{H}_{33}=\frac{k_{\bar{R}}}{2}\mathrm{diag}(\mathbf{tr}(A_1)-\lambda_{\bar{u}_1}^{A_1},\ldots,\mathbf{tr}(A_{N-1})-\lambda_{\bar{u}_{N-1}}^{A_{N-1}})$, $\mathbf{H}_{44}=-4k_{\bar{R}}\bar{\Omega}$.
It is clear that $\mathbf{H}_{11}>0$ and $\mathbf{H}_{33}=\mathbf{H}_{33}^{\top}>0$ since $A_{0}$ and $A_k,k\in\{1,\ldots,N-1\}$ are positive definite.
As shown in Lemma 2.2.2 in \cite{BerkaneSoulaimanephdtheis}, one has $\lambda_{v}^{\mathbf{E}(A_{0})}=\frac{1}{2}(\mathbf{tr}(A_{0})-\lambda_{v}^{A_{0}})>0$ and $\lambda_{v}^{\mathbf{E}(A_{k})}=\frac{1}{2}(\mathbf{tr}(A_{k})-\lambda_{v}^{A_{k}})>0$ for $k\in\{1,\ldots,N-1\}$.
Hence, $\mathbf{H}_{22}$ and $\mathbf{H}_{44}$ are negative definite.
It is clear that $-\frac{1}{2}J$ is also negative definite.
As a result, one can conclude that $\mathrm{Hess}(\overline{V})$ is an indefinite matrix. Consequently, the undesired equilibrium points in $\Xi^u$ are saddle points for $\overline{V}$, which implies that there exists a point $(\bar{R}'_{1,0},\bar{R}',\omega')\in\mathbb{B}_r$ such that $\overline{V}(\bar{R}'_{1,0},\bar{R}',\omega')>0$, there also exists another point $(\bar{R}''_{1,0},\bar{R}'',\omega'')\in\mathbb{B}_r$ such that $\overline{V}(\bar{R}''_{1,0},\bar{R}'',\omega'')<0$, where $\mathbb{B}_r=\{(\bar{R}_{1,0},\bar{R},\omega)~|~0< |\bar{R}_{1,0}^{\top}\bar{R}_0^{\ast}|_I+\sum_{k=1}^{N-1}|\bar{R}_k^{\top}\bar{R}_k^{\ast}|_{I}+\|\omega\|$
$<r\}$ with $r>0$.
Consequently, $\mathbb{U}=\{(\bar{R}_{1,0},\bar{R},\omega)\in\mathbb{B}_r|\overline{V}(\bar{R}_{1,0},\bar{R},\omega)>0\}$ is non-empty.

To sum up, the undesired equilibria in $\Xi^{u}$ are unstable by virtue of Chetaev's theorem.

Now, to show that the stable manifold associated to the undesired equilibria has zero Lebesgue measure, we linearize the closed loop system around the undesired equilibria and show that the Jacobian matrix does not have eigenvalues on the imaginary axis. This, with the fact that the undesired equilibria are unstable, allows us to conclude the proof of the claim by virtue of the stable manifold theorem \cite{Perko_book}.

Let $(\bar{R}_0^{\ast})^{\top}\bar{R}_{1,0}=\exp(\bar{r}_{1,0}^{\times})$ and
$(\bar{R}_k^{\ast})^{\top}\bar{R}_k=\exp(\bar{r}_k^{\times})$, where $\bar{r}_{1,0}\in\mathbb{R}^3$, $\bar{r}_k\in\mathbb{R}^3$, $k\in\{1,\ldots,N-1\}$.
It is clear that $\exp(\bar{r}_{1,0}^{\times})\approx I_3+\bar{r}_{1,0}^{\times}$ and $\exp(\bar{r}_{k}^{\times})\approx I_3+\bar{r}_k^{\times},k\in\mathcal{M}$, around the undesired equilibrium point $\Xi^{u}$, implying that $\bar{R}_{1,0}\approx\bar{R}_0^{\ast}(I_3+\bar{r}_{1,0}^{\times})$ and $\bar{R}\approx\bar{R}_k^{\ast}(I_3+\bar{r}_k^{\times})$ for $k\in\mathcal{M}$.
It follows that
\begin{equation}\label{psi10}
\begin{array}{rcl}
\psi(A_{0}\bar{R}_{1,0})&\approx&\psi(A_{0}\bar{R}_0^{\ast}(I_3+\bar{r}_{1,0}^{\times}))\\
\ &=&\frac{1}{2}\mathbf{vex}(A_{0}\bar{R}_0^{\ast}\bar{r}_{1.0}^{\times}+\bar{r}_{1,0}^{\times}(A_{0}\bar{R}_0^{\ast})^{\top})\\
\ &=&\frac{1}{2}\mathbf{vex}(((\mathbf{tr}(A_{0}\bar{R}_0^{\ast})I_3+(A_{0}\bar{R}_0^{\ast})^{\top})\bar{r}_{1,0})^{\times})\\
\ &=&\mathbf{E}(A_{0}\bar{R}_0^{\ast})\bar{r}_{1,0},\\
\end{array}
\end{equation}
and
\begin{equation}\label{psik}
\psi(A_k\bar{R}_k)\approx\mathbf{E}(A_k\bar{R}_k^{\ast})\bar{r}_k,
\end{equation}
for $k\in\{1,\ldots,N-1\}$, leading to the linearized angular velocity dynamics around the undesired equilibrium points in $\Xi^{u}$ as follows:
\begin{equation}\label{omegaapprox}
\begin{array}{rcl}
J_i\dot{\omega}_i&=&-k_{\bar{R}_0}\mathbf{E}(A_{i,0}\bar{R}_0^{\ast})\bar{r}_{i,0}\\
\ &\ &-k_{\bar{R}}\sum\limits_{k=1}^{N-1}\bar{L}_{ik}\mathbf{E}(A_{k}\bar{R}_k^{\ast})\bar{r}_k-k_{\omega}\omega_i,
\end{array}
\end{equation}
for $i\in\{1,\ldots,N\}$, where $\bar{r}_{i,0}=\mathbf{0}_{3}$ for $i=2,3,\ldots,N$, $\bar{L}\in\mathbb{R}^{3N\times 3(N-1)}$ with
each $3$ by $3$ block $\bar{L}_{ik}$ given by
\[
\bar{L}_{ik}=\left\{
\begin{array}{ll}
-\bar{R}_k^{\ast},&
k\in\mathcal{M}_i^+,\\
I_3,&
k\in\mathcal{M}_i^-,\\
\mathbf{0}_{3\times 3},&\mathrm{otherwise}.
\end{array}
\right.
\]
Denoting $\bar{r}_{0}=[\bar{r}_{1,0}^{\top}\ \mathbf{0}_{1\times 3(N-1)}]^{\top}$,
$\bar{r}=[\bar{r}_1^{\top}\ \cdots\ \bar{r}_{N-1}^{\top}]^{\top}$, 
$\Gamma_0=[\mathbf{E}^{\top}(A_{0}\bar{R}_0^{\ast})\ \mathbf{0}_{3\times 3(N-1)}]^{\top}$, $\Gamma=\mathrm{diag}(\mathbf{E}(A_1\bar{R}_1^{\ast}),\ldots,$
$\mathbf{E}(A_{N-1}\bar{R}_{N-1}^{\ast}))$, the linearized system around the undesired equilibrium points in $\Xi^{u}$ is given by
\begin{equation}\label{approxdynamic}
\begin{array}{rcl}
\dot{\bar{r}}_{1,0}&=&\omega_1,\\
\dot{\bar{r}}&=&\bar{L}^{\top}\omega,\\
J\dot{\omega}&=&-k_{\bar{R}_0}\Gamma_0\bar{r}_{1,0}-k_{\bar{R}}\bar{L}\Gamma\bar{r}-k_{\omega}\omega.\\
\end{array}
\end{equation}
Hence, the Jacobian matrix around the undesired equilibrium points in $\Xi^{u}$ is as follows:
\[
\mathbf{M}=\left[
\begin{array}{ccc}
\mathbf{0}_{3\times 3}&\mathbf{0}_{3\times 3(N-1)}&\mathbf{M}_{13}\\
\mathbf{0}_{3(N-1)\times 3}&\mathbf{0}_{3(N-1)\times 3(N-1)}&\mathbf{M}_{23}\\
J^{-1}\mathbf{M}_{31}&J^{-1}\mathbf{M}_{32}&J^{-1}\mathbf{M}_{33}\\
\end{array}
\right]
\]
where $\mathbf{M}_{13}=[I_3\ \mathbf{0}_{3\times 3(N-1)}]$, $\mathbf{M}_{23}=\bar{L}^{\top}$, $\mathbf{M}_{31}=-k_{\bar{R}_0}\Gamma_0$, $\mathbf{M}_{32}=-k_{\bar{R}}\bar{L}\Gamma$, $\mathbf{M}_{33}=-k_{\omega}I_{3N}$. 

Now we show that the Jacobian matrix $\mathbf{M}$ has no imaginary eigenvalues.
Assume that $0$ is an eigenvalue of $\mathbf{M}$, and the eigenvector associated to it is denoted by $z=[z_1^{\top}\ z_2^{\top}\ z_3^{\top}]^{\top}$, where $z_1\in\mathbb{R}^{3},z_2\in\mathbb{R}^{3(N-1)},z_3\in\mathbb{R}^{3N}$.
Then, $\mathbf{M}z=\mathbf{0}_{6N}$ implies that
\begin{equation}\label{eigenvalue0}
\begin{array}{l}
\mathbf{M}_{13}z_3=\mathbf{0}_3,\\
\mathbf{M}_{23}z_3=\mathbf{0}_{3(N-1)},\\
J^{-1}\sum\limits_{j=1}^3\mathbf{M}_{3j}z_j=\mathbf{0}_{3N}.
\end{array}
\end{equation}
Let $z_3=[z_{31}^{\top}\ \cdots z_{3N}^{\top}]^{\top}$, where $z_{3j}\in\mathbb{R}^3, j\in\{1,\ldots,N\}$.
It can be seen from the first equation in \eqref{eigenvalue0} that
$z_{31}=\mathbf{0}_3$.
As per the proof of Lemma 2 in \cite{boughellaba2023distributed}, there exists a function $Q:(SO(3))^{N-1}\rightarrow\mathbb{R}^{3N\times 3}$ such that $z_3=Q(\bar{R}_1,\ldots,\bar{R}_{N-1})z_{31}$, leading to $z_3=\mathbf{0}_{3N}$.
This implies that the last equation in \eqref{eigenvalue0} becomes $\mathbf{M}_{31}z_1+\mathbf{M}_{32}z_2=\mathbf{0}_{3N}$, which leads to
\[
\begin{array}{l}
-k_{\bar{R}_0}\mathbf{E}(A_{0}\bar{R}_0^{\ast})z_1-k_{\bar{R}}\bar{L}_1\Gamma z_2=\mathbf{0}_3,\\
-k_{\bar{R}}\bar{L}_2\Gamma z_2=\mathbf{0}_{3(N-1)},
\end{array}
\]
where $\bar{L}_1\in\mathbb{R}^{3\times 3(N-1)}$ is obtained from $\bar{L}$ by taking the first three rows,  $\bar{L}_2\in\mathbb{R}^{3(N-1)\times 3(N-1)}$ is obtained from $\bar{L}$ by removing the first three rows, \textit{i.e.}, $\bar{L}=[\bar{L}_1^{\top}\ \bar{L}_2^{\top}]^{\top}$.
Since $\bar{L}$ is obtained from $L$ by replacing $\bar{R}_k$ by $\mathcal{R}(\pi,\bar{u}_k)$, $\bar{L}_2$ is nonsingular according to Lemma \ref{le1}.
Since  $\mathbf{E}(A_k\bar{R}_k^{\ast})$ is nonsingular, it is clear that $\Gamma$ is nonsingular.
It follows that $-k_{\bar{R}}\bar{L}_2\Gamma z_2=\mathbf{0}_{3(N-1)}$ implies $z_2=\mathbf{0}_{3(N-1)}$.
This results in $z_1=\mathbf{0}_3$ due to the fact that $\mathbf{E}(A_{0}\bar{R}_0^{\ast})$ is nonsingular.
To sum up, one has $z=\mathbf{0}_{6N}$, leading to a contradiction.
Therefore, $0$ is not an eigenvalue of $\mathbf{M}$.

Assume that $\mathbf{M}$ has an imaginary eigenvalue $i\lambda$, and the eigenvector associated to it is denoted by $z=[z_1^{\top}\ z_2^{\top}\ z_3^{\top}]^{\top}$, where $i^{2}=-1,\lambda\in\mathbb{R}\setminus\{0\}$, $z_1\in\mathbb{R}^{3},z_2\in\mathbb{R}^{3(N-1)},z_3\in\mathbb{R}^{3N}$.
Then, one has $\mathbf{M}z=i\lambda z$, implying that
\[
\left\{
\begin{array}{l}
\mathbf{M}_{13}z_3=i\lambda z_1,\\
\mathbf{M}_{23}z_3=i\lambda z_2,\\
J^{-1}\sum\limits_{j=1}^{3}\mathbf{M}_{3j}z_j=i\lambda z_3.\\
\end{array}
\right.
\]
Due to the fact that $\lambda\neq 0$, one has $z_j=-\frac{i}{\lambda}\mathbf{M}_{j3}z_3$ for $j=1,2$, resulting in
\begin{equation}\label{A4}
J^{-1}\mathbf{M}_{33}z_3-i(\frac{1}{\lambda}J^{-1}\sum\limits_{j=1}^{2}\mathbf{M}_{3j}\mathbf{M}_{j3}z_3+\lambda z_3)=\mathbf{0}_{3N}.
\end{equation}
Multiplying \eqref{A4} by $z_3^{\top}J$ on the left, one obtains
\[
z_3^{\top}\mathbf{M}_{33}z_3-i(\frac{1}{\lambda}\sum\limits_{j=1}^{2}z_3^{\top}\mathbf{M}_{3j}\mathbf{M}_{j3}z_3+\lambda z_3^{\top}Jz_3)=0,
\]
meaning that $0=z_3^{\top}\mathbf{M}_{33}z_3=-k_{\omega}\|z_3\|^2$, further inferring that $z_3=\mathbf{0}_{3N}$.
Based on this, we conclude that $z=\mathbf{0}_{6N}$ owing to $z_j=-\frac{i}{\lambda}\mathbf{M}_{j3}z_3=\mathbf{0}$ for $j=1,2$.
This is a contradiction.
Hence, $\mathbf{M}$ has no imaginary eigenvalue.

Combined with the fact that the undesired equilibrium points in $\Xi^{u}$ are unstable, $\mathbf{M}$ must have at least one eigenvalue with a positive real part.
According to the stable manifold theorem, the stable manifold associated to the undesired equilibrium points in $\Xi^{u}$ has a zero Lebesgue measure.
Consequently, $\Xi^d$ is almost globally attractive. This with the fact that $\Xi^d$ is stable, allows us to conclude that $\Xi^d$ is almost globally asymptotically stable. 
Finally, we prove the last item of the theorem.
We have shown that $\lim_{t\rightarrow\infty}\omega_i=\mathbf{0}_3$ for $i\in\mathcal{V}$ and $\lim_{t\rightarrow\infty}R_0^{\top}R_1=\lim_{t\rightarrow\infty}\bar{R}_{1,0}=I_3$, leading to $\lim_{t\rightarrow\infty}R_1=R_0$.
Since the interconnection graph $\mathcal{G}$ is an  undirected tree, there exists a path between the $i$-th rigid body and the $1$-st rigid body, where $i\in\mathcal{V}\setminus\{1\}$.
Assume that the path between rigid body systems $i$ and $1$ is denoted by $(1,k_1,\ldots,k_{s},i)$.
Then one has
\[
\begin{array}{l}
\lim_{t\rightarrow\infty}R_0^{\top}R_i=\lim_{t\rightarrow\infty}R_0^{\top}R_1R_1^{\top}R_{k_1}\cdots R_{k_s}^{\top}R_i\\
=(\lim_{t\rightarrow\infty}R_0^{\top}R_1)(\lim_{t\rightarrow\infty}R_1^{\top}R_{k_1}) \cdots (\lim_{t\rightarrow\infty}R_{k_s}^{\top}R_i)=I_3,
\end{array}
\]
leading to $\lim_{t\rightarrow\infty}R_i=R_0$  for $i\in\mathcal{V}\setminus\{1\}$.

\end{proof}

\section{Simulation}\label{sec5}
Consider a network of 7 rigid body systems interconnected as per the graph shown in Figure \ref{fig1.1}. The constant desired attitude $R_0$ is available only to the rigid body system $1$.
The neighbor sets are given as $\mathcal{N}(1)=\{3\},\mathcal{N}(2)=\{7\}, \mathcal{N}(3)=\{1,5,6\},\mathcal{N}(4)=\{5\},\mathcal{N}(5)=\{3,4\}, \mathcal{N}(6)=\{3,7\},\mathcal{N}(7)=\{2,6\}$.

\begin{figure}[!htbp]
\centering
\tikzstyle{startstop} = [rectangle, rounded corners, minimum width = 0.8cm, minimum height=0.5cm,text centered, draw = black]
\tikzstyle{test}=[diamond,aspect=2,draw,thin]
\tikzstyle{point}=[coordinate,on grid,]
\tikzstyle{line} = [draw, -latex']
\begin{tikzpicture}
\node[startstop,draw=white](1){\includegraphics[width=0.05\textwidth]{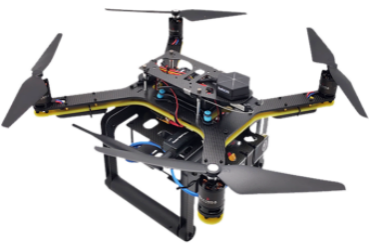}};
\node[startstop,right of=1,node distance=16mm,draw=red,fill=red](2){\includegraphics[width=0.05\textwidth]{uav.png}};
\node[startstop,right of=2,node distance=20mm,draw=white](3){\includegraphics[width=0.05\textwidth]{uav.png}};
\node[point,right of=3,node distance=12mm](point3){};
\node[startstop,below of=point3,node distance=12mm,draw=white](4){\includegraphics[width=0.05\textwidth]{uav.png}};
\node[point,left of=4,node distance=18mm](point4){};
\node[startstop,below of=point4,node distance=12mm,draw=white](5){\includegraphics[width=0.05\textwidth]{uav.png}};
\node[startstop,left of=5,node distance=20mm,draw=white](6){\includegraphics[width=0.05\textwidth]{uav.png}};
\node[startstop,left of=6,node distance=20mm,draw=white](7){\includegraphics[width=0.05\textwidth]{uav.png}};
\draw[-] (2)--(3);
\draw[-] (3)--(5);
\draw[-] (3)--(6);
\draw[-] (1)--(7);
\draw[-] (5)--(4);
\draw[-] (6)--(7);
\node[startstop,left of=1,node distance=9.8mm,draw=white](11){2};
\node[startstop,above of=2,node distance=7mm,draw=white](22){{\color{red}1}};
\node[startstop,above of=3,node distance=6mm,draw=white](33){3};
\node[startstop,above of=4,node distance=6mm,draw=white](44){4};
\node[startstop,right of=5,node distance=7mm,draw=white](55){5};
\node[startstop,above of=6,node distance=6mm,draw=white](66){6};
\node[startstop,left of=7,node distance=7mm,draw=white](77){7};
\end{tikzpicture}
\caption{An interconnection graph $\mathcal{G}$ of 7 rigid body systems}\label{fig1.1}
\end{figure}

The desired attitude is taken as $R_0=\mathcal{R}(0.8\pi,\frac{1}{\|u_0\|}u_0)$, where $u_0=[1\ 4\ 2]^{\top}$.
The inertia matrix $J_i$ of each rigid body is given by $J_i=\frac{1}{10}\mathrm{diag}(i,i+2,2i), i\in\{1,\ldots,7\}$. 
The initial conditions are as follows:
\[
\begin{array}{ll}
R_1(0)=\mathcal{R}(-0.1\pi,\frac{1}{\|u_2\|}u_2),&\omega_1(0)=[0\ 1\ 1]^{\top},\\
R_2(0)=\mathcal{R}(0.3\pi,u_1),&\omega_2(0)=[1\ 0\ 1]^{\top},\\
R_3(0)=\mathcal{R}(0.6\pi,u_1),&\omega_3(0)=[1\ 1\ 0]^{\top},\\
R_4(0)=\mathcal{R}(-0.2\pi,\frac{1}{\|u_2\|}u_2),&\omega_4(0)=[0\ 2\ 1]^{\top},\\
R_5(0)=\mathcal{R}(0.5\pi,u_1),&\omega_5(0)=[1\ 1\ 1]^{\top},\\
R_6(0)=\mathcal{R}(-0.2\pi,\frac{1}{\|u_2\|}u_2),&\omega_6(0)=[1\ 3\ 1]^{\top},\\
R_7(0)=\mathcal{R}(0.1\pi,\frac{1}{\|u_2\|}u_2),&\omega_7(0)=[3\ 0\ 1]^{\top},\\
\end{array}
\]
where $u_1=[0\ 1\ 0]^{\top},u_2=[1\ 1\ 0]^{\top}$.
The parameters are chosen as $k_{\bar{R}_0}=2.5,k_{\bar{R}}=2,k_{\tilde{\omega}}=1.5$ and
\[
\begin{array}{ll}
A_{1,0}=\mathrm{diag}(5,8,10),&\ \\
A_{2,7}=\mathrm{diag}(5,6,8),&A_{1,3}=\mathrm{diag}(6,8,10),\\
A_{3,5}=\mathrm{diag}(7,8,9),&A_{3,6}=\mathrm{diag}(5,7,8),\\
A_{4,5}=\mathrm{diag}(6,7,10),&A_{6,7}=\mathrm{diag}(5,7,10).\\
\end{array}
\]

The simulation results for the proposed control law \eqref{control} are shown in Fig. \ref{fig5.2} - Fig. \ref{fig5.5}.
Fig. \ref{fig5.2} - Fig. \ref{fig5.3} show that the attitude and angular velocity of each rigid body synchronize to the attitude and the angular velocity of its neighbors. Fig. \ref{fig5.4} - Fig. \ref{fig5.5} show that all the rigid body attitudes synchronize to the constant desired attitude $R_0$ and all the angular velocities converge to zero.

\begin{figure}[!htbp]
\centering
\includegraphics[width=0.5\textwidth]{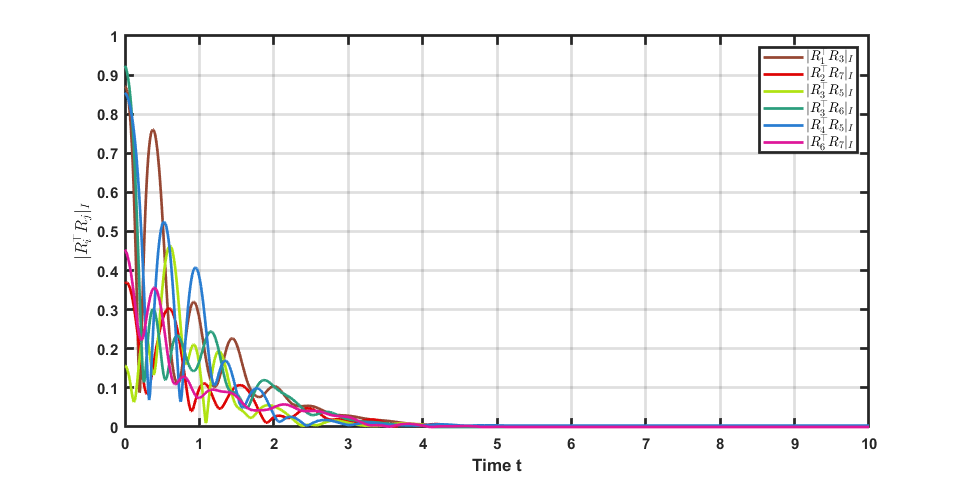}
\caption{The time evolution of the relative attitude associated with each edge}
\label{fig5.2}
\end{figure}

\begin{figure}[!htbp]
\centering
\includegraphics[width=0.5\textwidth]{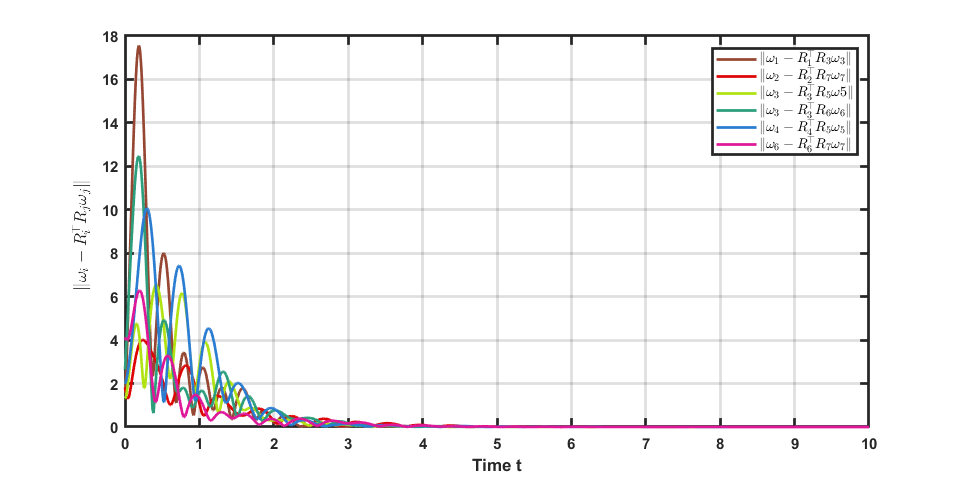}
\caption{The time evolution of the angular velocity error associated with each edge}
\label{fig5.3}
\end{figure}

\begin{figure}[!htbp]
\centering
\includegraphics[width=0.5\textwidth]{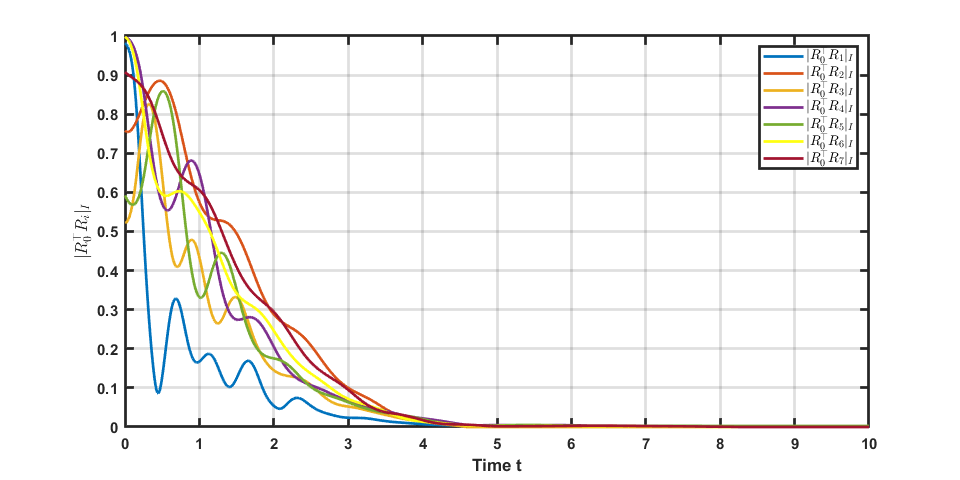}
\caption{The time evolution of the relative attitude between each rigid body system and the leader}
\label{fig5.4}
\end{figure}
\begin{figure}[!htbp]
\centering
\includegraphics[width=0.5\textwidth]{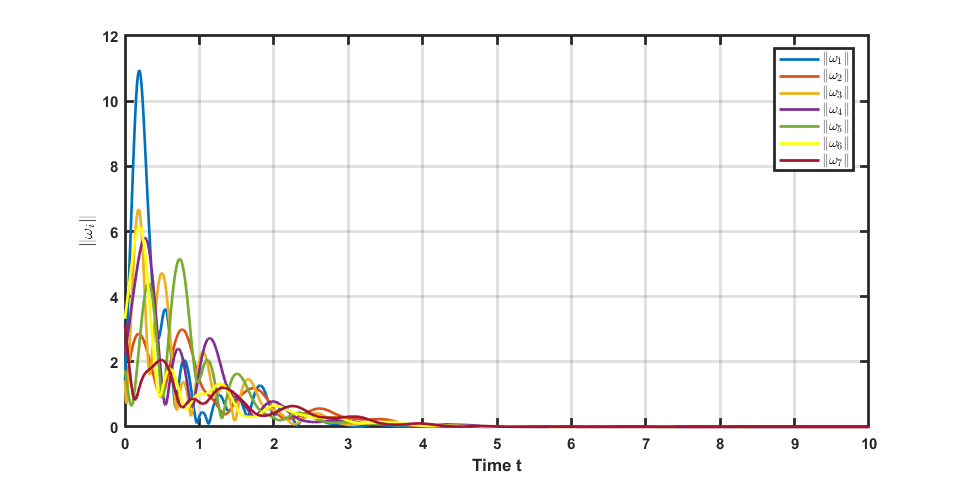}
\caption{The time evolution of the angular velocity associated with each rigid body system}
\label{fig5.5}
\end{figure}

\section{Conclusion}\label{sec6}

In this paper, we have proposed a leader-follower distributed control law guaranteeing almost global attitude synchronization of a group of heterogeneous rigid body systems on $SO(3)$. The proposed distributed control law relies solely on local information exchange described by an undirected, connected, and acyclic graph topology, and enables the attitude synchronization of the rigid body systems to a constant desired orientation known only to a single rigid body.
In our future research work, we will investigate a challenging open problem, namely, the leader-follower attitude synchronization problem on $SO(3)$, with almost global (or global) asymptotic stability guarantees, where the desired attitude is time-varying and the local information exchange between agents is subject to communication delays. We will also investigate the possibility of solving the problem under a more general interconnection graph topology.

\bibliographystyle{IEEEtran}
\bibliography{ifacconf}

\end{document}